\def\ScaleIfNeeded{%
	\ifdim\Gin@nat@width>\linewidth \linewidth \else \Gin@nat@width
	\fi } \makeatother
\title{Semantic-Relay-Aided Text Transmission: Placement Optimization and Bandwidth Allocation}
\author{\IEEEauthorblockN{Tianyu Liu\IEEEauthorrefmark{1}, Changsheng You\IEEEauthorrefmark{1}, Zeyang Hu\IEEEauthorrefmark{1}, Chenyu Wu\IEEEauthorrefmark{2}, Yi Gong\IEEEauthorrefmark{1}, and Kaibin Huang\IEEEauthorrefmark{3}} 
\vspace{0.2cm}
\IEEEauthorblockA{ \IEEEauthorrefmark{1}\text{Department of Electronic and Electrical Engineering}, \text{Southern University of Science and Technology}, Shenzhen, China\\
}
\IEEEauthorblockA{\IEEEauthorrefmark{2}\text{School of Electronic and Information Engineering}, \text{Harbin Institute of Technology}, Harbin, China\\
}
\IEEEauthorblockA{\IEEEauthorrefmark{3}\text{Department of Electrical and Electronic Engineering}, \text{The University of Hong Kong}, Hong Kong\\
}
Emails: \{liuty2022, huzy2022\}@mail.sustech.edu.cn, \{youcs, gongy\}@sustech.edu.cn, \\wuchenyu@hit.edu.cn, huangkb@eee.hku.hk.
}
\begin{document}
\newtheorem{theorem}{\emph{\underline{Theorem}}}
\newtheorem{acknowledgement}[theorem]{Acknowledgement}
\renewcommand{\algorithmicensure}{ \textbf{repeat:}}
\newtheorem{axiom}[theorem]{Axiom}
\newtheorem{case}[theorem]{Case}
\newtheorem{claim}[theorem]{Claim}
\newtheorem{conclusion}[theorem]{Conclusion}
\newtheorem{condition}[theorem]{Condition}
\newtheorem{conjecture}[theorem]{\emph{\underline{Conjecture}}}
\newtheorem{criterion}[theorem]{Criterion}
\newtheorem{definition}{\emph{\underline{Definition}}}
\newtheorem{exercise}[theorem]{Exercise}
\newtheorem{lemma}{\emph{\underline{Lemma}}}
\newtheorem{corollary}{\emph{\underline{Corollary}}}
\newtheorem{notation}[theorem]{Notation}
\newtheorem{problem}[theorem]{Problem}
\newtheorem{proposition}{\emph{\underline{Proposition}}}
\newtheorem{solution}[theorem]{Solution}
\newtheorem{summary}[theorem]{Summary}
\newtheorem{assumption}{Assumption}
\newtheorem{example}{\bf Example}
\newtheorem{remark}{\bf \emph{\underline{Remark}}}

\def\qed{$\Box$}
\def\QED{\mbox{\phantom{m}}\nolinebreak\hfill$\,\Box$}
\def\proof{\noindent{\emph{Proof:} }}
\def\poof{\noindent{\emph{Sketch of Proof:} }}
\def
\endproof{\hspace*{\fill}~\qed
\par
\endtrivlist\unskip}
\def\endproof{\hspace*{\fill}~\qed\par\endtrivlist\vskip3pt}

\def\E{\mathsf{E}}
\def\eps{\varepsilon}
\def\Lsp{{\boldsymbol L}}
\def\Bsp{{\boldsymbol B}}
\def\lsp{{\boldsymbol\ell}}
\def\Ltsp{{\Lsp^2}}
\def\Lpsp{{\Lsp^p}}
\def\Linsp{{\Lsp^{\infty}}}
\def\LtR{{\Lsp^2(\Rst)}}
\def\ltZ{{\lsp^2(\Zst)}}
\def\ltsp{{\lsp^2}}
\def\ltZt{{\lsp^2(\Zst^{2})}}
\def\ninN{{n{\in}\Nst}}
\def\oh{{\frac{1}{2}}}
\def\grass{{\cal G}}
\def\ord{{\cal O}}
\def\dist{{d_G}}
\def\conj#1{{\overline#1}}
\def\ntoinf{{n \rightarrow \infty }}
\def\toinf{{\rightarrow \infty }}
\def\tozero{{\rightarrow 0 }}
\def\trace{{\operatorname{trace}}}
\def\ord{{\cal O}}
\def\UU{{\cal U}}
\def\rank{{\operatorname{rank}}}
\def\acos{{\operatorname{acos}}}

\def\SINR{\mathsf{SINR}}
\def\SNR{\mathsf{SNR}}
\def\SIR{\mathsf{SIR}}
\def\tSIR{\widetilde{\mathsf{SIR}}}
\def\Ei{\mathsf{Ei}}
\def\l{\left}
\def\r{\right}
\def\({\left(}
\def\){\right)}
\def\lb{\left\{}
\def\rb{\right\}}

\setcounter{page}{1}

\newcommand{\eref}[1]{(\ref{#1})}
\newcommand{\fig}[1]{Fig.\ \ref{#1}}

\def\bydef{:=}
\def\ba{{\mathbf{a}}}
\def\bb{{\mathbf{b}}}
\def\bc{{\mathbf{c}}}
\def\bd{{\mathbf{d}}}
\def\bee{{\mathbf{e}}}
\def\bff{{\mathbf{f}}}
\def\bg{{\mathbf{g}}}
\def\bh{{\mathbf{h}}}
\def\bi{{\mathbf{i}}}
\def\bj{{\mathbf{j}}}
\def\bk{{\mathbf{k}}}
\def\bl{{\mathbf{l}}}
\def\bm{{\mathbf{m}}}
\def\bn{{\mathbf{n}}}
\def\bo{{\mathbf{o}}}
\def\bp{{\mathbf{p}}}
\def\bq{{\mathbf{q}}}
\def\br{{\mathbf{r}}}
\def\bs{{\mathbf{s}}}
\def\bt{{\mathbf{t}}}
\def\bu{{\mathbf{u}}}
\def\bv{{\mathbf{v}}}
\def\bw{{\mathbf{w}}}
\def\bx{{\mathbf{x}}}
\def\by{{\mathbf{y}}}
\def\bz{{\mathbf{z}}}
\def\b0{{\mathbf{0}}}

\def\bA{{\mathbf{A}}}
\def\bB{{\mathbf{B}}}
\def\bC{{\mathbf{C}}}
\def\bD{{\mathbf{D}}}
\def\bE{{\mathbf{E}}}
\def\bF{{\mathbf{F}}}
\def\bG{{\mathbf{G}}}
\def\bH{{\mathbf{H}}}
\def\bI{{\mathbf{I}}}
\def\bJ{{\mathbf{J}}}
\def\bK{{\mathbf{K}}}
\def\bL{{\mathbf{L}}}
\def\bM{{\mathbf{M}}}
\def\bN{{\mathbf{N}}}
\def\bO{{\mathbf{O}}}
\def\bP{{\mathbf{P}}}
\def\bQ{{\mathbf{Q}}}
\def\bR{{\mathbf{R}}}
\def\bS{{\mathbf{S}}}
\def\bT{{\mathbf{T}}}
\def\bU{{\mathbf{U}}}
\def\bV{{\mathbf{V}}}
\def\bW{{\mathbf{W}}}
\def\bX{{\mathbf{X}}}
\def\bY{{\mathbf{Y}}}
\def\bZ{{\mathbf{Z}}}

\def\mA{{\mathbb{A}}}
\def\mB{{\mathbb{B}}}
\def\mC{{\mathbb{C}}}
\def\mD{{\mathbb{D}}}
\def\mE{{\mathbb{E}}}
\def\mF{{\mathbb{F}}}
\def\mG{{\mathbb{G}}}
\def\mH{{\mathbb{H}}}
\def\mI{{\mathbb{I}}}
\def\mJ{{\mathbb{J}}}
\def\mK{{\mathbb{K}}}
\def\mL{{\mathbb{L}}}
\def\mM{{\mathbb{M}}}
\def\mN{{\mathbb{N}}}
\def\mO{{\mathbb{O}}}
\def\mP{{\mathbb{P}}}
\def\mQ{{\mathbb{Q}}}
\def\mR{{\mathbb{R}}}
\def\mS{{\mathbb{S}}}
\def\mT{{\mathbb{T}}}
\def\mU{{\mathbb{U}}}
\def\mV{{\mathbb{V}}}
\def\mW{{\mathbb{W}}}
\def\mX{{\mathbb{X}}}
\def\mY{{\mathbb{Y}}}
\def\mZ{{\mathbb{Z}}}

\def\cA{\mathcal{A}}
\def\cB{\mathcal{B}}
\def\cC{\mathcal{C}}
\def\cD{\mathcal{D}}
\def\cE{\mathcal{E}}
\def\cF{\mathcal{F}}
\def\cG{\mathcal{G}}
\def\cH{\mathcal{H}}
\def\cI{\mathcal{I}}
\def\cJ{\mathcal{J}}
\def\cK{\mathcal{K}}
\def\cL{\mathcal{L}}
\def\cM{\mathcal{M}}
\def\cN{\mathcal{N}}
\def\cO{\mathcal{O}}
\def\cP{\mathcal{P}}
\def\cQ{\mathcal{Q}}
\def\cR{\mathcal{R}}
\def\cS{\mathcal{S}}
\def\cT{\mathcal{T}}
\def\cU{\mathcal{U}}
\def\cV{\mathcal{V}}
\def\cW{\mathcal{W}}
\def\cX{\mathcal{X}}
\def\cY{\mathcal{Y}}
\def\cZ{\mathcal{Z}}
\def\cd{\mathcal{d}}
\def\Mt{M_{t}}
\def\Mr{M_{r}}
\def\O{\Omega_{M_{t}}}
\newcommand{\figref}[1]{{Fig.}~\ref{#1}}
\newcommand{\tabref}[1]{{Table}~\ref{#1}}

\newcommand{\var}{\mathsf{var}}
\newcommand{\fb}{\tx{fb}}
\newcommand{\nf}{\tx{nf}}
\newcommand{\BC}{\tx{(bc)}}
\newcommand{\MAC}{\tx{(mac)}}
\newcommand{\Pout}{p_{\mathsf{out}}}
\newcommand{\nnn}{\nn\\}
\newcommand{\FB}{\tx{FB}}
\newcommand{\TX}{\tx{TX}}
\newcommand{\RX}{\tx{RX}}
\renewcommand{\mod}{\tx{mod}}
\newcommand{\m}[1]{\mathbf{#1}}
\newcommand{\td}[1]{\tilde{#1}}
\newcommand{\sbf}[1]{\scriptsize{\textbf{#1}}}
\newcommand{\stxt}[1]{\scriptsize{\textrm{#1}}}
\newcommand{\suml}[2]{\sum\limits_{#1}^{#2}}
\newcommand{\sumlk}{\sum\limits_{k=0}^{K-1}}
\newcommand{\eqhsp}{\hspace{10 pt}}
\newcommand{\tx}[1]{\texttt{#1}}
\newcommand{\Hz}{\ \tx{Hz}}
\newcommand{\sinc}{\tx{sinc}}
\newcommand{\tr}{\mathrm{tr}}
\newcommand{\diag}{\mathrm{diag}}
\newcommand{\MAI}{\tx{MAI}}
\newcommand{\ISI}{\tx{ISI}}
\newcommand{\IBI}{\tx{IBI}}
\newcommand{\CN}{\tx{CN}}
\newcommand{\CP}{\tx{CP}}
\newcommand{\ZP}{\tx{ZP}}
\newcommand{\ZF}{\tx{ZF}}
\newcommand{\SP}{\tx{SP}}
\newcommand{\MMSE}{\tx{MMSE}}
\newcommand{\MINF}{\tx{MINF}}
\newcommand{\RC}{\tx{MP}}
\newcommand{\MBER}{\tx{MBER}}
\newcommand{\MSNR}{\tx{MSNR}}
\newcommand{\MCAP}{\tx{MCAP}}
\newcommand{\vol}{\tx{vol}}
\newcommand{\ah}{\hat{g}}
\newcommand{\tg}{\tilde{g}}
\newcommand{\teta}{\tilde{\eta}}
\newcommand{\heta}{\hat{\eta}}
\newcommand{\uh}{\m{\hat{s}}}
\newcommand{\eh}{\m{\hat{\eta}}}
\newcommand{\hv}{\m{h}}
\newcommand{\hh}{\m{\hat{h}}}
\newcommand{\Po}{P_{\mathrm{out}}}
\newcommand{\Poh}{\hat{P}_{\mathrm{out}}}
\newcommand{\Ph}{\hat{\gamma}}
\newcommand{\mat}[1]{\begin{matrix}#1\end{matrix}}
\newcommand{\ud}{^{\dagger}}
\newcommand{\C}{\mathcal{C}}
\newcommand{\nn}{\nonumber}
\newcommand{\nInf}{U\rightarrow \infty}

\maketitle

\begin{abstract}
Semantic communication has emerged as a promising technology to break the Shannon limit by extracting the meaning of source data and sending relevant semantic information only. However, some mobile devices may have limited computation and storage resources, which renders it difficult to deploy and implement the resource-demanding deep learning based semantic encoder/decoder. To tackle this challenge, we propose in this paper a new \emph{semantic relay} (SemRelay), which is equipped with a semantic receiver for assisting text transmission from a resource-abundant base station (BS) to a resource-constrained mobile device. Specifically, the SemRelay first decodes the semantic information sent by the BS (with a semantic transmitter) and then forwards it to the user by adopting conventional bit transmission, hence effectively improving the text transmission efficiency. We formulate an optimization problem to maximize the achievable (effective) bit rate by jointly designing the SemRelay placement and bandwidth allocation. Although this problem is non-convex and generally difficult to solve, we propose an efficient penalty-based algorithm to obtain a high-quality suboptimal solution. Numerical results show the close-to-optimal performance of the proposed algorithm as well as significant rate performance gain of the proposed SemRelay over conventional decode-and-forward relay.
\end{abstract}

\begin{IEEEkeywords}
Semantic communication, semantic relay, placement optimization, bandwidth allocation, penalty-based method.
\end{IEEEkeywords}

\vspace{-0.3cm}
\section{Introduction}

Semantic communication has emerged as a promising technology to improve transmission efficiency in future sixth-generation (6G) wireless systems. Specifically, by delivering semantic meaning contained in the source rather than reliably transmitting bit sequence, semantic communication significantly reduces the communication overhead and improves the resource utilization efficiency \cite{P_Zhang,W_Tong}.

In the existing literature, various approaches (e.g., deep learning (DL), knowledge graph) have been proposed to improve the semantic communication performance \cite{Z_Yang, Huiqiang_Xie,z_Weng,8,9}. Specifically, the authors in \cite{Huiqiang_Xie} and \cite{z_Weng} proposed a DL-based end-to-end semantic communication system (DeepSC) for text and speech transmissions, respectively, by jointly designing semantic-and-channel coding methods. It was shown that semantic communication can achieve superior transmission performance over conventional bit transmission, especially in low signal-to-noise ratio (SNR) and small-bandwidth regions. These works were further extended in \cite{9} by devising multi-modal semantic communication systems, referred to as U-DeepSC. To characterize the semantic communication efficiency, a new performance metric, called semantic rate, was proposed in \cite{Lei_Yan}, which measures the amount of semantic information effectively transmitted per second. Besides system architecture designs, some initial efforts have also been devoted to designing resource management for optimizing the performance of semantic communication systems. For example, the authors in \cite{Xidong_Mu} studied the optimal resource allocation policy for a heterogeneous semantic and bit transmission system, and characterized the boundary of the semantic-versus-bit rate region achieved by different multiple access schemes. In addition, the authors in \cite{10} proposed a quality-of-experience (QoE) aware resource allocation scheme to maximize the QoE by jointly designing the number of transmitted semantic symbols, channel assignment, and power allocation. However, the existing works have mostly assumed DL neural networks (e.g., DeepSC receiver) deployed and executed at the mobile devices. This, however, overlooks a key fact that some mobile devices may have limited computation and storage resources, which are incapable of performing DL-based semantic communications.

\begin{figure}[!t]
	\centering
	\includegraphics[height=4.5cm, width=8.5cm]{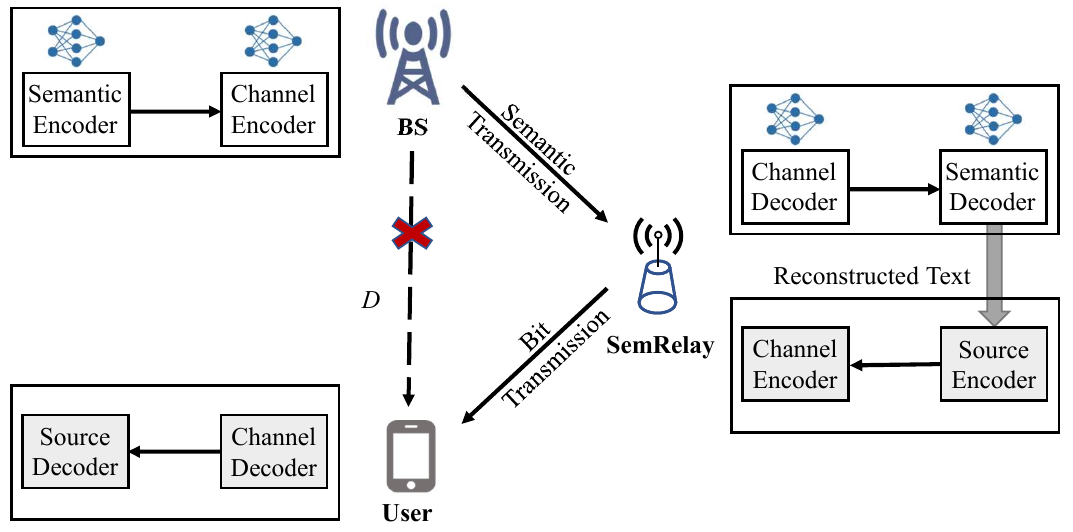}
	\caption{A SemRelay-aided text transmission system.}
	\label{max_rate_comparison}
\end{figure}

To address this issue, we, for the first time, propose a new \emph{semantic relay} (SemRelay) as illustrated in Fig. 1, to assist text transmission from a base station (BS) equipped with a DeepSC transmitter to a resource-constrained user. Specifically, different from the conventional decode-and-forward (DF) relay designed for bit transmissions in both the BS$\rightarrow$relay and relay$\rightarrow$user links, the SemRelay equipped with a well-trained DeepSC receiver first decodes the semantic information sent by the BS, and then forwards it to the user by adopting conventional bit-based transmission. The key advantages are two folds. First, compared with conventional DF relay, the SemRelay greatly improves the text transmission efficiency since semantic communication is performed in the transmitter-SemRelay link. Second, for resource-constrained users, the SemRelay effectively relieves their computation and storage burden, since the computation-demanding semantic decoding is performed at the SemRelay.

In this paper, we consider a SemRelay-aided single-user text transmission system, where the BS$\rightarrow$SemRelay semantic transmission and SemRelay$\rightarrow$user bit transmission are operated in orthogonal frequency bands. An optimization problem is formulated to maximize the achievable (effective) bit rate by jointly designing the bandwidth allocation and SemRelay placement. This problem, however, is non-convex with intricately coupled variables, which renders the widely-used alternating optimization (AO) method inefficient. To tackle this difficulty, we propose an efficient penalty-based algorithm to obtain a high-quality suboptimal solution. Numerical results demonstrate the effectiveness of the proposed algorithm as well as significant rate performance gain of the proposed SemRelay over conventional DF relay. In particular, it is shown that the SemRelay should be placed closer to the user when the total bandwidth decreases and more bandwidth should be allocated to bit transmission for achieving near-optimal performance.

\section{System Model and Problem Formulation}

\subsection{System Model}

Consider a SemRelay-aided text transmission system as shown in Fig. \ref{max_rate_comparison}, where a BS sends text information to a typical user in a user zone\footnote{This work can be extended to the communication system with multiple users by designing resource allocation among them to maximize the minimum rate of all users, which is left for our future work.}, both equipped with a single antenna\footnote{This work can be easily extended to the case with a multi-antenna BS by designing the transmit beamforming of the BS to maximize the SNR, which does not affect the main results of this paper.}. We assume that the BS has a well-trained DeepSC transmitter, while the user has limited storage and computation capabilities, hence incapable of decoding the semantic information sent by the BS\footnote{For the mobile devices with abundant computation and storage resources, the BS can directly send semantic information to them.}. Moreover, the direct link between the BS and user is assumed to be blocked due to long distance and severe blockage. Thus, a SemRelay with a DeepSC receiver model and abundant computation resources is properly deployed to assist the text transmission in two phases. Specifically, in Phase 1, the BS sends the text information to the SemRelay via its DeepSC transmitter. Next, in Phase 2, the SemRelay first decodes text from received signals by employing the channel-and-semantic decoders, and then forwards it to the user by adopting conventional bit-based transmission. Moreover, we assume negligible semantic decoding delay at the SemRelay, thanks to its abundant computation resources. Furthermore, to avoid communication interference, we adopt the frequency division multiple access (FDMA) scheme for the SemRelay, where the BS$\rightarrow$SemRelay semantic transmission and SemRelay$\rightarrow$user bit transmission are operated in orthogonal frequency bands. Let $d_\text{br}$ denote the distance between the BS and SemRelay. Without loss of generality, we consider a Cartesian coordinate system where the BS, SemRelay, and user are located at $\boldsymbol{u}_\text{b}=(0,0,0)$, $\boldsymbol{u}_\text{r}=(d_{\text{br}}, 0, H)$, and $\boldsymbol{u}_\text{u}=(D, 0, 0)$, respectively, with the SemRelay deployed at an altitude of $H$ to establish light-of-sight (LoS) dominant channels with the BS and user.

\subsubsection{Model for BS$\rightarrow$SemRelay semantic transmission}
Under the LoS-dominant channel model, the channel from the BS to SemRelay, denoted by $h_{\text{br}}$, can be modeled as $h_{\text{br}}=\rho_0/(d_{\text{br}}^2+H^2)^{{\beta}/2}$, where $\rho_0$ denotes the channel power gain at a reference distance $d_0 = 1$ meter (m), and $\beta$ denotes the path loss exponent. Let $W$ denote the total bandwidth of the system and $\alpha_{\text{br}}$ the fraction of bandwidth allocated to the BS$\rightarrow$SemRelay link. Then, the received SNR at the SemRelay, $\gamma_\text{{br}}$ (in dB), is given by
\begin{align}
\gamma_\text{{br}}=10\log_{10}\left(\frac{P_{\text{b}}\rho_0}{(d_{\text{br}}^2+H^2)^{\beta/2}\alpha_{\text{br}}WN_0}\right), \label{SemSNRdB}
\end{align}where $P_{\text{b}}$ is the BS transmit power and $N_{\text{0}}$ is the power spectral density of the received white Gaussian noise.

Let $\boldsymbol{s}$ denote a sentence sent by the BS and $L$ the average number of words per sentence. Unlike conventional bit-based transmission scheme, semantic transmission processes each sentence by feeding it into a DeepSC transceiver to generate a semantic symbol vector $\boldsymbol{w} \in \mathbb{R}^{KL \times 1}$, where $K$ denotes the average number of semantic symbols for each word in the original sentence. Moreover, let $I$ denote the semantic units (suts) that characterizes the average amount of semantic information contained in each sentence. Then $I/L$ in suts/word represents the average amount of semantic information contained in each word for each sentence. To evaluate the performance of the DeepSC model, a new performance metric, called \emph{semantic similarity}, was proposed in \cite{Huiqiang_Xie} that measures the distance of semantic information between two sentences. Specifically, the semantic similarity was shown to be dependent on the received SNR $\gamma_\text{{br}}$ and $K$. According to \cite{Lei_Yan} and \cite{Xidong_Mu}, for any $K$, the semantic similarity function $\varepsilon\in[0,1]$ generally follows an `S' shape with respect to SNR. Thus, it can be approximated as a sigmoid function by using the generalized logistic regression method, which is given by \cite{Xidong_Mu}
\begin{align}
\varepsilon=a_1+\frac{a_2}{1+e^{-\left(c_1\gamma_\text{{br}}+c_2\right)}}.
\end{align}
Herein, $a_1$, $a_2$, $c_1$, and $c_2$ are constant coefficients dependent on $K$. In order to ensure the accuracy of the recovered data, $\varepsilon$ should be lower-bounded by a minimum semantic similarity $\bar{\varepsilon}$, i.e, $\varepsilon \geq \bar{\varepsilon}$ \cite{Lei_Yan}. Following the above semantic similarity model, the achievable \emph{semantic rate} (susts/s) in the BS$\rightarrow$SemRelay link is given by \cite{Xidong_Mu}
\begin{align}
R_{\rm br}^{(\rm sem)}=\frac{\alpha_{\text{br}}WI}{KL}\varepsilon.\label{SemComRate}
\end{align}
Let $\mu$ in bits/word represent the average number of bits contained in each word in text transmission. Then the achievable \emph{semantic-to-bit} rate bits/s (bps) can be obtained as \cite{Lei_Yan}
\begin{align}
R_{\rm br}^{(\rm bit)}= \mu\frac{R_{\rm br}^{(\rm sem)}}{I/L}=\frac{\mu\alpha_{\text{br}}W}{K}\varepsilon.\label{WordComRate}
\end{align}

\subsubsection{Model for SemRelay$\rightarrow$user bit transmission} At the SemRelay, it first decodes the semantic symbols from the BS with its DeepSC receiver and then transmits them to the user based on bit transmission. Let $d_{\text{ru}}$ denote the horizontal distance between the SemRelay and user, $P_{\text{r}}$ denote the transmit power of the SemRelay, and $\alpha_{\text{ru}}$ denote the portion of bandwidth allocated to bit transmission for the SemRelay$\rightarrow$user link. Then the SemRelay$\rightarrow$user channel power gain is $h_{\text{ru}}=\rho_0/(d_{\text{ru}}^2+H^2)^{{\beta}/2}$, and its corresponding achievable bit rate in bps is
\begin{align}
R_{\rm ru}^{(\rm bit)}=\alpha_{\text{ru}}W\log_2\left(1+\frac{P_{\text{r}}\rho_0}{(d_{\text{ru}}^2+H^2)^{{\beta}/2}\alpha_{\text{ru}}WN_0}\right).\label{BitComRate}
\end{align}

\subsection{Problem Formulation}
For the considered SemRelay-aided text transmission system, our target is to maximize its achievable (effective) bit rate by jointly optimizing the SemRelay placement $\boldsymbol{d}\triangleq\{d_{\text{br}}, d_{\text{ru}}\}$ and bandwidth allocation $\boldsymbol{\alpha}\triangleq\{\alpha_{\text{br}}, \alpha_{\text{ru}}\}$. As the system achievable bit rate is $\eta\triangleq\min\{R_{\rm br}^{(\rm bit)},R_{\rm ru}^{(\rm bit)}\}$, this optimization problem can be formulated as
\begin{subequations} \label{prob1}
\begin{align}
{\text{(P1)}}\max \limits_{\boldsymbol{d},\boldsymbol{\alpha},\gamma_{\text{br}},\eta} \;\;&\eta\nonumber\\
\textrm{s.t.}\quad\;\;
&\eta \le {\alpha_{\text{ru}}W}\log_2\left(1+\frac{P_{\text{r}}\rho_0}{(d_{\text{ru}}^2+H^2)^{\beta/2}\alpha_{\text{ru}}WN_0}\right), \label{SlaRateb}\\
&\eta \le \frac{\alpha_{\text{br}}W\mu}{K}\left(a_1+\frac{a_2}{1+e^{-\left(c_1\gamma_{\text{br}}+c_2\right)}}\right), \label{InforCausCons} \\
&a_1+\frac{a_2}{1+e^{-\left(c_1\gamma_{\text{br}}+c_2\right)}} \geq \bar{\varepsilon}, \label{MinSimiCons}\\
&\gamma_{\text{br}} = 10\log_{10}\left(\frac{P_{\text{b}}\rho_0}{(d_{\text{br}}^2+H^2)^{\beta/2}\alpha_{\text{br}}WN_0}\right),\label{gamma}\\
&\alpha_{\text{br}} \geq 0, \alpha_{\text{ru}} \geq 0, \label{MinBandCons}\\
&\alpha_{\text{br}}+\alpha_{\text{ru}} = 1, \label{TotBandCons}\\
&d_{\text{br}} \geq 0, d_{\text{ru}} \geq 0, \label{LocaCons} \\
&d_{\text{br}} + d_{\text{ru}} = D, \label{LocaConsTot}
\end{align}
\end{subequations}
where \eqref{SlaRateb} and \eqref{InforCausCons} are the rate constraints for the bit and semantic transmissions, respectively, \eqref{TotBandCons} is the total bandwidth constraint, and \eqref{MinSimiCons} is the minimum semantic similarity constraint. Note that the constraint \eqref{MinSimiCons} can be equivalently expressed as
\vspace{-0.1cm}
\begin{align}
\gamma_{\text{br}} \geq \frac{1}{c_1}\ln\left(\frac{\bar{\varepsilon}-a_1}{a_1+a_2-\bar{\varepsilon}}\right)-\frac{c_2}{c_1}.\label{LocaSlaSimiThresh}
\end{align}
In particular, combining \eqref{SemSNRdB} and \eqref{LocaSlaSimiThresh}, we can obtain that the required bandwidth for semantic transmission is upper-bounded by $\alpha_{\text{br}}W\le \frac{P_{\text{b}}\rho_0}{(d_{\text{br}}^2+H^2)^{\beta/2}N_0\exp\left(\frac{\ln(10)}{10c_1}\ln\left(\frac{\bar{\varepsilon}-a_1}{a_1+a_2-\bar{\varepsilon}}\right)-\frac{\ln(10)c_2}{10c_1}\right)}$. This indicates that semantic communication can be more favorable in the low-bandwidth region.

Problem (P1) is a non-convex optimization problem since the constraints in \eqref{SlaRateb}--\eqref{gamma} are non-convex with coupled optimization variables. Although the optimal solution to problem (P1) can be obtained by a two-dimensional exhaustive search, it incurs prohibitively high computation complexity that is unaffordable in practice. Moreover, it can be numerically shown that the widely-used alternating AO method cannot be directly used for solving this problem, since the optimization variables are strongly coupled in \eqref{SlaRateb}--\eqref{MinSimiCons}, which thus renders the AO method getting stuck at a low-quality solution. To address the above issues, we propose an efficient iterative algorithm in the next section to obtain a high-quality suboptimal solution to (P1).

\section{Penalty-based Algorithm}
In this section, we propose a two-layer penalty-based algorithm to sub-optimally solve problem (P1). Specifically, the inner layer solves a penalized optimization problem by applying the block coordinate descent (BCD) method given a fixed penalty coefficient, while the outer layer updates the penalty coefficient, until the convergence is achieved.
\subsection{Problem Reformulation}
First, to address the issue of coupled variables in constraints \eqref{SlaRateb}, \eqref{InforCausCons}, and \eqref{gamma}, we introduce a set of auxiliary variables $\boldsymbol{z}\triangleq\{\hat{d}_{\text{br}}, \hat{d}_{\text{ru}}, \hat{\alpha}_{\text{br}}, \hat{\alpha}_{\text{ru}}\}$ for the constraints \eqref{TotBandCons} and \eqref{LocaConsTot} and define
{\setlength\abovedisplayskip{1pt}
\setlength\belowdisplayskip{3pt}
\begin{align}
&\alpha_{\text{br}} = \hat{\alpha}_{\text{br}}, \alpha_{\text{ru}} = \hat{\alpha}_{\text{ru}},\label{PenaltyB}\\
&d_{\text{br}} = \hat{d}_{\text{br}}, d_{\text{ru}} = \hat{d}_{\text{ru}}.\label{PenaltyD}
\end{align}}Then the constraints \eqref{TotBandCons} and \eqref{LocaConsTot} can be respectively re-expressed as
{\setlength\abovedisplayskip{1pt}
\setlength\belowdisplayskip{3pt}
\begin{align}
\hat{\alpha}_{\text{br}}+\hat{\alpha}_{\text{ru}} = 1, \label{TotBandConsP}\\
\hat{d}_{\text{br}} + \hat{d}_{\text{ru}} = D. \label{LocaConsTotP}
\end{align}}By replacing constraints \eqref{TotBandCons} and \eqref{LocaConsTot} with \eqref{TotBandConsP} and \eqref{LocaConsTotP}, respectively, problem (P1) can be equivalently transformed as
{\setlength\abovedisplayskip{0.1pt}
\setlength\belowdisplayskip{4pt}
\begin{subequations} \label{prob2}
\begin{align}
{\text{(P2)}}\max \limits_{\boldsymbol{d},\boldsymbol{\alpha},\boldsymbol{z},\gamma_{\text{br}},\eta}\;\;&\eta \nonumber\\
\textrm{s.t.}\quad\;\;\;
&\eqref{SlaRateb}, \eqref{InforCausCons}, \eqref{gamma}, \eqref{MinBandCons}, \eqref{LocaCons}, \eqref{LocaSlaSimiThresh}-\eqref{LocaConsTotP}. \nonumber
\end{align}
\end{subequations}}Next, we penalize the equality constraints \eqref{PenaltyB} and \eqref{PenaltyD} by adding them as a penalty term in the objective function of problem (P2), hence yielding the following problem
\begin{subequations} \label{prob3}
\begin{align}
{\text{(P3)}}\max \limits_{\boldsymbol{d},\boldsymbol{\alpha},\boldsymbol{z},\gamma_{\text{br}},\eta}\;\;&\eta-\frac{1}{2\lambda}\{(\alpha_{\text{br}}-\hat{\alpha}_{\text{br}})^2+(\alpha_{\text{ru}}-\hat{\alpha}_{\text{ru}})^2\nonumber\\
&\qquad\qquad+\nu(d_{\text{br}}-\hat{d}_{\text{br}})^2+\nu(d_{\text{ru}}-\hat{d}_{\text{ru}})^2\} \nonumber \\
\textrm{s.t.}\quad\;\;\;
&\eqref{SlaRateb}, \eqref{InforCausCons}, \eqref{gamma}, \eqref{MinBandCons}, \eqref{LocaCons},  \eqref{LocaSlaSimiThresh}, \eqref{TotBandConsP}, \eqref{LocaConsTotP}, \nonumber
\end{align}
\end{subequations}
where $\lambda$ denotes the penalty coefficient used for penalizing the equality constraints in (P2), and $\nu$ is a weight for balancing the magnitudes of the bandwidth and distance penalties. By gradually decreasing the value of $\lambda$, we can obtain a solution to (P3) that satisfies all equality constraints in (P2) within a predefined accuracy\cite{Q_shi}.

However, given ${\lambda}>0$, problem (P3) is still a non-convex problem due to the non-convex constraints in \eqref{SlaRateb}, \eqref{InforCausCons}, and \eqref{gamma}. To address this issue, we apply the BCD method to partition its optimization variables into three blocks: 1) SemRelay placement $\boldsymbol{d}$, 2) bandwidth allocation $\boldsymbol{\alpha}$, and 3) auxiliary variables $\boldsymbol{z}$. Then, the above three blocks are optimized alternately and iteratively until the convergence is reached.

\subsection{Inner Layer: BCD Method for Solving \text{(P3)}}

\subsubsection{Placement optimization}
For any given bandwidth allocation $\boldsymbol{\alpha}$ and auxiliary variables $\boldsymbol{z}$, problem (P3) reduces to the following problem for optimizing the SemRelay placement
\begin{subequations} \label{Locaprob1}
\begin{align}
{\text{(P4)}}\max \limits_{\boldsymbol{d},\gamma_{\text{br}},\eta} \;\;&\eta-\frac{\nu}{2\lambda}\{(d_{\text{br}} - \hat{d}_{\text{br}})^2+(d_{\text{ru}} - \hat{d}_{\text{ru}})^2\} \nonumber \\
\textrm{s.t.}\quad&\eqref{SlaRateb}, \eqref{InforCausCons}, \eqref{gamma}, \eqref{LocaCons}, \eqref{LocaSlaSimiThresh}.\nonumber
\end{align}
\end{subequations}Problem (P4) is a non-convex optimization problem due to the non-convex constraints in \eqref{SlaRateb}, \eqref{InforCausCons}, and \eqref{gamma}. Fortunately, they can be efficiently addressed by using the successive convex approximation (SCA) method, as elaborated below.
\begin{lemma}
\emph{
For the constraint \eqref{SlaRateb}, $R_{\rm ru}^{(\rm bit)}$ is a convex function of $(d_{\text{ru}}^2+H^2)^{\frac{\beta}{2}}$. For any local point $\tilde{d}_{\text{ru}}$, $R_{\rm ru}^{(\rm bit)}$ in \eqref{SlaRateb} can be lower-bounded as
\begin{align}
R_{\rm ru}^{(\rm bit)} &\geq {\alpha_{\text{ru}}W}\left(E_1-E_2\left((d_{\text{ru}}^2+H^2)^{\frac{\beta}{2}}-(\tilde{d}_{\text{ru}}^2+H^2)^{\frac{\beta}{2}}\right)\right)\nonumber\\
&\triangleq R_{\rm ru}^{(\rm bit)(\rm lb)}, \label{LocaSlaRateb1}
\end{align}
where the coefficients $E_1$ and $E_2$ are respectively given by $E_1=\log_2\left(1+\frac{P_{\text{r}}\rho_0}{(\tilde{d}_{\text{ru}}^2+H^2)^{{\beta}/2}\alpha_{\text{ru}}WN_0}\right)$ and $E_2= \left(\frac{P_{\text{r}}\rho_0\log_2(e)}{(\tilde{d}_{\text{ru}}^2+H^2)^{{\beta}}\alpha_{\text{ru}}WN_0}\right)/\left(1+\frac{P_{\text{r}}\rho_0}{(\tilde{d}_{\text{ru}}^2+H^2)^{{\beta}/2}\alpha_{\text{ru}}WN_0}\right)$.
}
\end{lemma}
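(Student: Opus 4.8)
The plan is to recognize that the bit rate $R_{\rm ru}^{(\rm bit)}$ in \eqref{SlaRateb}, for fixed $\alpha_{\text{ru}}$ and $W$, depends on $d_{\text{ru}}$ only through the composite quantity $t \triangleq (d_{\text{ru}}^2+H^2)^{\beta/2}$, so I would first substitute $t$ and write $R_{\rm ru}^{(\rm bit)} = \alpha_{\text{ru}}W\log_2\!\left(1+\frac{P_{\text{r}}\rho_0}{t\,\alpha_{\text{ru}}WN_0}\right)$. Let $g(t)\triangleq\log_2(1+b/t)$ with $b\triangleq\frac{P_{\text{r}}\rho_0}{\alpha_{\text{ru}}WN_0}>0$. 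The first step is to establish convexity of $g$ on $t>0$: differentiate twice, obtaining $g'(t)=\frac{\log_2(e)}{t}\cdot\frac{-b/t}{1+b/t}=-\frac{b\log_2(e)}{t(t+b)}$ and then $g''(t)=b\log_2(e)\cdot\frac{2t+b}{t^2(t+b)^2}>0$, which shows $g$ is convex in $t$. This gives the first sentence of the lemma.

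Next, I would invoke the defining property of a convex function: it lies above any of its tangent lines, i.e. $g(t)\ge g(\tilde t)+g'(\tilde t)(t-\tilde t)$ for every reference point $\tilde t>0$. Taking $\tilde t \triangleq (\tilde d_{\text{ru}}^2+H^2)^{\beta/2}$, I identify $g(\tilde t)=E_1$ and $-g'(\tilde t)=E_2$ after a short computation: indeed $-g'(\tilde t)=\frac{b\log_2(e)}{\tilde t(\tilde t+b)}$, and multiplying numerator and denominator appropriately and substituting $b=\frac{P_{\text{r}}\rho_0}{\alpha_{\text{ru}}WN_0}$ yields exactly the stated $E_2=\left(\frac{P_{\text{r}}\rho_0\log_2(e)}{\tilde t^{\,2}\alpha_{\text{ru}}WN_0}\right)\big/\left(1+\frac{P_{\text{r}}\rho_0}{\tilde t\,\alpha_{\text{ru}}WN_0}\right)$, since $\tilde t(\tilde t+b)=\tilde t^{\,2}(1+b/\tilde t)$ and $\tilde t^{\,2}=(\tilde d_{\text{ru}}^2+H^2)^{\beta}$. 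Substituting back $t=(d_{\text{ru}}^2+H^2)^{\beta/2}$ and $\tilde t=(\tilde d_{\text{ru}}^2+H^2)^{\beta/2}$, and multiplying the tangent-line inequality through by the positive constant $\alpha_{\text{ru}}W$, produces precisely \eqref{LocaSlaRateb1}.

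The verification is essentially routine once the right change of variable is spotted; the only mildly delicate point — and the part I would be most careful about — is the bookkeeping in matching $g'(\tilde t)$ to the exact algebraic form of $E_2$ claimed in the statement, making sure the factor $\tilde t^{\,2}=(\tilde d_{\text{ru}}^2+H^2)^{\beta}$ (rather than $(\tilde d_{\text{ru}}^2+H^2)^{\beta/2}$) appears in the denominator of the first factor and that the $\log_2(e)$ from the change of base in differentiating $\log_2$ is placed correctly. I would also note that this bound is useful for SCA because $(d_{\text{ru}}^2+H^2)^{\beta/2}$ is itself convex in $d_{\text{ru}}$ for $\beta\ge 1$ (or can be handled together with the companion substitution used elsewhere in (P4)), so the right-hand side of \eqref{LocaSlaRateb1} is a concave lower bound usable as a surrogate constraint; but strictly speaking the lemma as stated only asserts convexity in $(d_{\text{ru}}^2+H^2)^{\beta/2}$ and the tangent-line inequality, so the proof need not go further than the two steps above.
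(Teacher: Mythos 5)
Your proof is correct and takes essentially the same approach as the paper: the paper's own proof of this lemma is a one-line appeal to the first-order Taylor (tangent-line) lower bound for a convex function, and you have simply supplied the details it omits --- the change of variable $t=(d_{\text{ru}}^2+H^2)^{\beta/2}$, the convexity check via $g''(t)=b\log_2(e)\frac{2t+b}{t^2(t+b)^2}>0$, and the coefficient matching showing $g(\tilde t)=E_1$ and $-g'(\tilde t)=E_2$, all of which are accurate.
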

\begin{proof}
It can be proved by applying the first-order Taylor expansion for a convex function to obtain its lower bound \cite{S_Boyd}.
\end{proof}
\begin{lemma}
\emph{
For the constraint \eqref{InforCausCons}, $\psi\triangleq1/\left(1+e^{-\left(c_1\gamma_{\text{br}}+c_2\right)}\right)$ is a convex function of $\left(1+e^{-\left(c_1\gamma_{\text{br}}+c_2\right)}\right)$. For any local point $\tilde{\gamma}_{\text{br}}$, $\psi$ can be lower-bounded as
\begin{align}
\psi\geq E_3-E_4\left(e^{-\chi}-e^{-\tilde{\chi}}\right)\triangleq{\psi}^{(\rm lb)},\label{PlaSRRateCons1}
\end{align}where $\chi=c_1\gamma_{\text{br}}+c_2$, and the coefficients $\tilde{\chi}$, $E_3$, and $E_4$ are given by $\tilde{\chi}=c_1\tilde{\gamma}_{\text{br}}+c_2$, $E_3 = 1/\left(1+e^{-\left(c_1\tilde{\gamma}_{\text{br}}+c_2\right)}\right)$, and $E_4 = 1/\left(1+e^{-\left(c_1\tilde{\gamma}_{\text{br}}+c_2\right)}\right)^2$, respectively.
}
\end{lemma}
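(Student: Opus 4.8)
The plan is to mirror the argument of Lemma 1: identify an inner variable with respect to which $\psi$ is convex, then apply the first-order Taylor lower bound for convex functions. First I would set $t \triangleq 1 + e^{-(c_1\gamma_{\text{br}}+c_2)} = 1 + e^{-\chi}$, so that $\psi = 1/t$ with $t > 1$. Since $\psi(t) = 1/t$ has second derivative $2/t^3 > 0$ on $t > 0$, the function $\psi$ is convex in $t$. Hence, for any local point $\tilde{\gamma}_{\text{br}}$ with corresponding $\tilde{t} = 1 + e^{-\tilde{\chi}}$, the first-order Taylor expansion of a convex function gives the global underestimator
\begin{align}
\psi \;=\; \frac{1}{t} \;\geq\; \frac{1}{\tilde{t}} - \frac{1}{\tilde{t}^{\,2}}\,(t - \tilde{t}).\nonumber
\end{align}
Substituting $t - \tilde{t} = e^{-\chi} - e^{-\tilde{\chi}}$ and recognizing $1/\tilde{t} = E_3$ and $1/\tilde{t}^{\,2} = E_4$ yields exactly $\psi \geq E_3 - E_4\big(e^{-\chi} - e^{-\tilde{\chi}}\big)$, which is \eqref{PlaSRRateCons1}.

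The only subtlety worth checking is that the resulting bound is genuinely useful for the SCA step, i.e.\ that the right-hand side $\psi^{(\rm lb)}$ is concave (or at least yields a tractable constraint) in the optimization variable $\gamma_{\text{br}}$. Note $e^{-\chi} = e^{-c_1\gamma_{\text{br}}-c_2}$ is convex in $\gamma_{\text{br}}$, so $-E_4 e^{-\chi}$ is concave in $\gamma_{\text{br}}$ (since $E_4 > 0$), and therefore $\psi^{(\rm lb)}$ is concave in $\gamma_{\text{br}}$; after combining with the affine prefactor $\tfrac{\alpha_{\text{br}}W\mu}{K}$ outside this block, constraint \eqref{InforCausCons} becomes a convex constraint in the placement block. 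I would remark on this consequence but it is not part of the formal statement.

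The main obstacle here is essentially cosmetic rather than mathematical: one must be careful that the chosen inner variable is the one named in the statement — the lemma asserts convexity of $\psi$ \emph{as a function of} $\big(1+e^{-(c_1\gamma_{\text{br}}+c_2)}\big)$, i.e.\ of $t$, not of $\gamma_{\text{br}}$ directly — and that the Taylor expansion is taken in $t$ and then re-expressed in terms of the exponential. Everything else is a one-line invocation of the standard fact (see \cite{S_Boyd}) that a differentiable convex function lies above each of its tangents, exactly as in the proof of Lemma 1. No separate handling of edge cases is needed since $t \geq 1 + e^{-\tilde{\chi}} \cdot 0^{+} > 1 > 0$ throughout the relevant range, so $\psi(t) = 1/t$ is smooth and strictly convex on the domain of interest.
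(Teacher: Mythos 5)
Your proof is correct and follows exactly the route the paper intends: treat $\psi=1/t$ with $t=1+e^{-\chi}$ as a convex function of $t$ and apply the first-order Taylor (tangent-line) lower bound at $\tilde{t}=1+e^{-\tilde{\chi}}$, recovering $E_3=1/\tilde{t}$ and $E_4=1/\tilde{t}^2$; this mirrors the one-line proof the paper gives for Lemma~1 and implicitly reuses for Lemma~2. Your added remark on the concavity of $\psi^{(\rm lb)}$ in $\gamma_{\text{br}}$ is a correct and useful sanity check for the SCA step, though not required by the statement.
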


Next, for the non-affine constraint \eqref{gamma}, we first relax it as
\begin{align}
\gamma_{\text{br}}&\le10\log_{10}\left(\frac{P_{\text{b}}\rho_0}{(d_{\text{br}}^2+H^2)^{\beta/2}\alpha_{\text{br}}WN_0}\right)\nonumber\\
&=10\log_{10}\left(\frac{P_{\text{b}}\rho_0}{\alpha_{\text{br}}WN_0}\right)-5\beta\log_{10}\left(d_{\text{br}}^2+H^2\right). \label{Plagamma}
\end{align}
Then, it can be proved by contradiction that the equality in the constraint \eqref{Plagamma} always holds in the optimal solution to the relaxed problem. Moreover, we have the following result.
\begin{lemma}
\emph{
For the constraint \eqref{Plagamma}, $\phi\triangleq\log_{10}\left(d_{\text{br}}^2+H^2\right)$ is a concave function of $d_{\text{br}}^2$. For any local point $\tilde{d}_{\text{br}}$, $\phi$ can be upper-bounded as
\begin{align}
\phi\le E_5+E_6(d_{\text{br}}^2-\tilde{d}_{\text{br}}^2)\triangleq \phi^{(\rm up)},\label{GammaCons1}
\end{align}where the coefficients $E_5$ and $E_6$ are respectively given by $E_5 = \log_{10}(\tilde{d}_{\text{br}}^2+H^2)$ and $E_6=\log_{10}(e)/(\tilde{d}_{\text{br}}^2+H^2)$.
}
\end{lemma}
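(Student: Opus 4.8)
The plan is to mirror the argument already used for Lemmas~1 and~2: recognize that $\phi$ is a concave function of the auxiliary quantity $d_{\text{br}}^2$, and then invoke the elementary fact that a first-order Taylor expansion of a differentiable concave function about any point is a global overestimator of that function.

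First I would introduce the change of variable $t\triangleq d_{\text{br}}^2\ge 0$, so that $\phi$ becomes $g(t)\triangleq\log_{10}\left(t+H^2\right)$. Differentiating once gives $g'(t)=\log_{10}(e)/(t+H^2)$, and differentiating again gives $g''(t)=-\log_{10}(e)/(t+H^2)^2<0$ for all $t\ge 0$ (recalling $H>0$). Hence $g$ is concave on its domain, which establishes the first assertion of the lemma.

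Next, for any differentiable concave $g$ and any local point $\tilde{t}\triangleq\tilde{d}_{\text{br}}^2$, the supporting-hyperplane inequality yields $g(t)\le g(\tilde{t})+g'(\tilde{t})(t-\tilde{t})$. Substituting $g(\tilde{t})=\log_{10}(\tilde{d}_{\text{br}}^2+H^2)=E_5$ and $g'(\tilde{t})=\log_{10}(e)/(\tilde{d}_{\text{br}}^2+H^2)=E_6$, and re-substituting $t=d_{\text{br}}^2$, gives $\phi\le E_5+E_6\left(d_{\text{br}}^2-\tilde{d}_{\text{br}}^2\right)=\phi^{(\rm up)}$, exactly as stated. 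As a by-product one also gets that the bound is exact at the expansion point, i.e., $\phi^{(\rm up)}\big|_{d_{\text{br}}=\tilde{d}_{\text{br}}}=\phi$, which is what later guarantees feasibility and monotonicity of the SCA iterates.

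The derivation is essentially routine, so the only point worth flagging is a conceptual one rather than a technical obstacle: concavity must be asserted with respect to the composite variable $d_{\text{br}}^2$, not with respect to $d_{\text{br}}$ itself, since $\log_{10}(d_{\text{br}}^2+H^2)$ fails to be concave in $d_{\text{br}}$ over all of $[0,\infty)$. This is precisely why the upper bound in \eqref{GammaCons1} is phrased in the $d_{\text{br}}^2$ variable, and care must be taken when this surrogate is later plugged back into the relaxed constraint \eqref{Plagamma} inside the BCD loop.
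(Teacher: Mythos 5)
Your proof is correct and follows exactly the route the paper intends (and uses explicitly for Lemma~1): treat $\phi$ as a function of the variable $t=d_{\text{br}}^2$, verify concavity via the sign of the second derivative, and apply the first-order Taylor expansion of a concave function as a global upper bound, which reproduces $E_5$ and $E_6$ as stated. Your added observation that concavity holds in $d_{\text{br}}^2$ but not in $d_{\text{br}}$ itself is accurate and worth keeping, but the argument is otherwise the same as the paper's.
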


Based on Lemmas 1--3, by replacing $R_{\rm ru}^{(\rm bit)}$ in \eqref{SlaRateb}, $\psi$ in \eqref{InforCausCons}, and $\phi$ in \eqref{Plagamma} with their corresponding lower or upper bounds in \eqref{LocaSlaRateb1}, \eqref{PlaSRRateCons1}, and \eqref{GammaCons1}, respectively, problem (P4) can be transformed into the following approximate form
\begin{subequations} \label{Locaprob2}
\begin{align}
{\text{(P5)}}\max \limits_{\boldsymbol{d},\gamma_{\text{br}},\eta} \;\;&\eta-\frac{\nu}{2\lambda}\{(d_{\text{br}} - \hat{d}_{\text{br}})^2+(d_{\text{ru}} - \hat{d}_{\text{ru}})^2\} \nonumber \\
\textrm{s.t.}\quad
&\eta \le R_{\rm ru}^{(\rm bit)(\rm lb)},\label{RruOpt}\\
&\eta \le \frac{\alpha_{\text{br}}W\mu}{K}\left(a_1+a_2\psi^{(\rm lb)}\right),\label{RbrOpt}\\
&\gamma_{\text{br}}\le10\log_{10}\left(\frac{P_{\text{b}}\rho_0}{\alpha_{\text{br}}WN_0}\right)-5\beta\phi^{(\rm up)},\label{GammabrOpt}\\
&\eqref{LocaCons}, \eqref{LocaSlaSimiThresh}. \nonumber
\end{align}
\end{subequations}
Problem (P5) is now a convex optimization problem, which can be efficiently solved by using CVX solvers \cite{CVX}.

\subsubsection{Bandwidth optimization}
For any given SemRelay placement $\boldsymbol{d}$ and auxiliary variables $\boldsymbol{z}$, problem (P3) reduces to the following problem for optimizing the bandwidth allocation
\begin{subequations} \label{Bandprob1}
\begin{align}
{\text{(P6)}}\max \limits_{\boldsymbol{\alpha},\gamma_{\text{br}},\eta} \;\;&\eta-\frac{1}{2\lambda}\{(\alpha_{\text{br}} - \hat{\alpha}_{\text{br}})^2+(\alpha_{\text{ru}} - \hat{\alpha}_{\text{ru}})^2\}\nonumber\\
\textrm{s.t.}\quad
&\eqref{SlaRateb}, \eqref{InforCausCons}, \eqref{gamma}, \eqref{MinBandCons}, \eqref{LocaSlaSimiThresh}.\nonumber
\end{align}
\end{subequations}
Problem (P6) is a non-convex optimization problem due to the non-convex constraints in \eqref{InforCausCons} and \eqref{gamma}. To address this issue, we first equivalently rewrite the constraint \eqref{InforCausCons} as
\begin{align}
\eta &\le \frac{\alpha_{\text{br}}SW\mu}{K}=\frac{W\mu}{4K}\left((\alpha_{\text{br}}+S)^2-(\alpha_{\text{br}}-S)^2\right), \label{BandInforCausCons}\\
S&=a_1+\frac{a_2}{1+e^{-\left(c_1\gamma_{\text{br}}+c_2\right)}}.\label{BandSlacSimiequ}
\end{align}
Although \eqref{BandInforCausCons} is still a non-convex constraint, we can address it by using the SCA method, as shown below.

\begin{lemma}
\emph{
For the constraint \eqref{BandInforCausCons}, $\delta\triangleq(\alpha_{\text{br}}+S)^2$ is a convex function of $\alpha_{\text{br}}$ and $S$. For any local points $\tilde{\alpha}_{\text{br}}$ and $\tilde{S}$, $\delta$ can be lower-bounded as
\begin{align}
\delta&\geq-(\tilde{\alpha}_{\text{br}}+\tilde{S})^2+2(\tilde{\alpha}_{\text{br}}+\tilde{S})(\alpha_{\text{br}}+S)\triangleq \delta^{(\rm lb)}. \label{BandInforCausCons2}
\end{align}
}
\end{lemma}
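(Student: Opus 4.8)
The plan is to show that $\delta \triangleq (\alpha_{\text{br}}+S)^2$, viewed as a function of the two real variables $\alpha_{\text{br}}$ and $S$, is jointly convex, and then invoke the standard first-order lower bound for a convex function to produce the affine minorant $\delta^{(\rm lb)}$ in \eqref{BandInforCausCons2}. Concretely, I would first observe that $\delta = g(\alpha_{\text{br}},S)$ where $g$ is the composition of the affine map $(\alpha_{\text{br}},S)\mapsto \alpha_{\text{br}}+S$ with the scalar convex function $t\mapsto t^2$. Since composing a convex function with an affine map preserves convexity (see \cite{S_Boyd}), $\delta$ is jointly convex in $(\alpha_{\text{br}},S)$. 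Alternatively, one can simply compute the Hessian, which is the constant rank-one positive semidefinite matrix $\bigl[\begin{smallmatrix}2&2\\2&2\end{smallmatrix}\bigr]$, confirming convexity directly.

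The second step is to apply the first-order (supporting hyperplane) inequality for a differentiable convex function: for any local point $(\tilde{\alpha}_{\text{br}},\tilde{S})$,
\begin{align}
\delta \geq (\tilde{\alpha}_{\text{br}}+\tilde{S})^2 + 2(\tilde{\alpha}_{\text{br}}+\tilde{S})\bigl((\alpha_{\text{br}}-\tilde{\alpha}_{\text{br}})+(S-\tilde{S})\bigr). \nonumber
\end{align}
Expanding and collecting terms, the constant $(\tilde{\alpha}_{\text{br}}+\tilde{S})^2$ and the cross terms combine so that the right-hand side equals $-(\tilde{\alpha}_{\text{br}}+\tilde{S})^2 + 2(\tilde{\alpha}_{\text{br}}+\tilde{S})(\alpha_{\text{br}}+S)$, which is exactly $\delta^{(\rm lb)}$ in \eqref{BandInforCausCons2}. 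This establishes the claimed bound, and by construction the bound is tight at the expansion point, as required for the SCA iteration.

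Finally, I would note that replacing $\delta$ by $\delta^{(\rm lb)}$ in \eqref{BandInforCausCons} yields a valid convex restriction of the constraint, since $\delta^{(\rm lb)}$ is affine in $(\alpha_{\text{br}},S)$ and $S$ in turn relates to $\gamma_{\text{br}}$ through \eqref{BandSlacSimiequ}; combined with the minorant for the sigmoid term from Lemma 2, the full constraint becomes tractable. There is no genuine obstacle here — the only point requiring a modicum of care is making sure the affine rearrangement of the Taylor expansion is carried out correctly so that the stated closed form matches, and (for the eventual use in problem (P6)) verifying that the feasible set of the approximated problem remains an inner approximation of the original, which follows because each substitution is a lower bound on a quantity that is upper-bounding $\eta$.
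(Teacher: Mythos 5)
Your proof is correct and follows essentially the same route as the paper: establish joint convexity of $\delta=(\alpha_{\text{br}}+S)^2$ (via affine composition with $t\mapsto t^2$ or the positive semidefinite Hessian) and then apply the first-order Taylor (supporting hyperplane) lower bound at $(\tilde{\alpha}_{\text{br}},\tilde{S})$, whose rearrangement gives exactly $\delta^{(\rm lb)}$. The algebraic simplification and the tightness at the expansion point are both verified correctly, so nothing further is needed.
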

\noindent Using Lemma 4, the constraint \eqref{BandInforCausCons} can be upper-bounded as
\begin{align}
\eta\le\frac{W\mu}{4K}\left(\delta^{(\rm lb)}-(\alpha_{\text{br}}-S)^2\right).\label{BandInforCausCons2}
\end{align}
Next, for the non-affine constraint \eqref{BandSlacSimiequ}, we first relax it as
\begin{align}
S\le a_1+\frac{a_2}{1+e^{-\left(c_1\gamma_{\text{br}}+c_2\right)}}. \label{BandSlacSimi}
\end{align}
Then we can easily show that the relaxation actually does not affect the optimality since the equality in the constraint \eqref{BandSlacSimi} holds in the optimal solution to the relaxed problem.

Moreover, one can observe that, given $\boldsymbol{d}$ and $\boldsymbol{z}$, the non-convex constraints \eqref{BandSlacSimi} and \eqref{gamma} in (P6) have similar forms with the constraints \eqref{InforCausCons} and \eqref{gamma} in (P4), respectively. Following the similar procedures based on the SCA method, \eqref{BandSlacSimi} and \eqref{gamma} in problem (P6) can be transformed as
\begin{align}
&S\le a_1+a_2\left(E_7-E_8\left(e^{-\tau}-e^{-\tilde{\tau}}\right)\right),\label{BandMinSimiCons1}\\
&\gamma_{\text{br}} \le 10\log_{10}\left(\frac{P_{\text{b}}\rho_0}{(d_{\text{br}}^2+H^2)^{\beta/2}WN_0}\right)-E_9-E_{10}(\alpha_{\text{br}}-\tilde{\alpha}_{\text{br}}),\label{BandGammaCons1}
\end{align}
where $\tau=c_1\gamma_{\text{br}}+c_2$, and the coefficients $\tilde{\tau}=c_1\tilde{\gamma}_{\text{br}}+c_2$, $E_7 = 1/\left(1+e^{-\left(c_1\tilde{\gamma}_{\text{br}}+c_2\right)}\right)$, $E_8 = 1/\left(1+e^{-\left(c_1\tilde{\gamma}_{\text{br}}+c_2\right)}\right)^2$,  $E_9=10\log_{10}(\tilde{\alpha}_{\text{br}})$, and $E_{10}=10\log_{10}(e)/\tilde{\alpha}_{\text{br}}$ are determined by the local point $\tilde{\gamma}_{\text{br}}$ and $\tilde{\alpha}_{\text{br}}$, respectively.

By replacing the constraints \eqref{InforCausCons} and \eqref{gamma} in problem (P6) with those in \eqref{BandInforCausCons2}, \eqref{BandMinSimiCons1}, and \eqref{BandGammaCons1}, problem (P6) can be transformed into the following approximate problem
\begin{subequations} \label{Bandprob2}
\begin{align}
{\text{(P7)}}\max_{\tiny
     \begin{array}{cc}
         \boldsymbol{\alpha},\gamma_{\text{br}},S,\eta
     \end{array}
     }\;&\eta-\frac{1}{2\lambda}\{(\alpha_{\text{br}} - \hat{\alpha}_{\text{br}})^2+(\alpha_{\text{ru}} - \hat{\alpha}_{\text{ru}})^2\}  \nonumber\\
\textrm{s.t.}\quad\;\;
&\eqref{SlaRateb}, \eqref{MinBandCons}, \eqref{LocaSlaSimiThresh}, \eqref{BandInforCausCons2}, \eqref{BandMinSimiCons1}, \eqref{BandGammaCons1}.\nonumber
\end{align}
\end{subequations}
Problem (P7) is now a convex optimization problem, which can be efficiently solved by using CVX solvers \cite{CVX}.
\subsubsection{Auxiliary variables optimization}
For any given the SemRelay placement $\boldsymbol{d}$ and bandwidth allocation $\boldsymbol{\alpha}$, problem (P3) reduces to
\begin{subequations} \label{Slacprob1}
\begin{align}
{\text{(P8)}}\max \limits_{\boldsymbol{z}} \;\;&-\frac{1}{2\lambda}\{(\alpha_{\text{br}}-\hat{\alpha}_{\text{br}})^2+(\alpha_{\text{ru}}-\hat{\alpha}_{\text{ru}})^2\nonumber\\
&\qquad\qquad+\nu(d_{\text{br}}-\hat{d}_{\text{br}})^2+\nu(d_{\text{ru}}-\hat{d}_{\text{ru}})^2\}  \nonumber\\
\textrm{s.t.}\;\;
&\eqref{TotBandConsP}, \eqref{LocaConsTotP}. \nonumber
\end{align}
\end{subequations}
Problem (P8) is a convex problem which can be efficiently solved by using CVX solvers \cite{CVX}.

To summarize, a suboptimal solution to (P3) can be obtained by solving problems (P5), (P7), and (P8) alternately and iteratively, until the convergence is achieved.

\subsection{Outer Layer: Update Penalty Coefficient}

The outer layer aims to update the penalty coefficient $\lambda$ with $\lambda=c\lambda$ to guarantee the converged solution satisfying the equality constraints in (P2), where $c$ is a constant scaling factor.

\subsection{Algorithm Analysis}

For any solution obtained to (P3), we evaluate the violation of equality constraints in (P2) by adopting the maximum of four penalty terms, $\zeta$. The proposed algorithm is terminated when $\zeta$ is less than a predefined accuracy $\epsilon_{\rm 1}$ for all equality constraints. Note that as the penalty coefficient $\lambda$ decreases, the penalty term becomes larger and eventually guarantees the equality constraints. Moreover, for any $\lambda$, the objective value of (P3) in the inner layer can be obtained by solving convex subproblems (P5), (P7), and (P8) alternately and iteratively, which thus is non-decreasing over iterations. Besides, since the objective value is upper-bounded, the proposed penalty-based algorithm is guaranteed to converge \cite{Q_shi}. Furthermore, the overall algorithm complexity can be characterized as $\mathcal{O}(I_{\text{in}}I_{\text{out}}N^{3.5})$, where $N$ is the total number of optimization variables, $I_{\text{in}}$ and $I_{\text{out}}$ denote the numbers of the outer and inner iterations required for convergence, respectively \cite{S_Boyd}.

\section{Numerical Results}
In this section, we present numerical results to evaluate the effectiveness of the proposed SemRelay as well as the penalty-based algorithm. The simulation setup is as follows without otherwise specified. The BS-user horizontal distance is $D=100$ m and the relay is deployed at an altitude of $H=10$ m. The reference path loss is $\rho_0=-60$ dB and the path loss exponent is set as $\beta=3$. According to \cite{Xidong_Mu}, we set $K=4$, for which we have $a_1=0.3980$, $a_2 = 0.5385$, $c_1 = 0.2815$, and $c_2 = -1.3135$. Moreover, we assume that, on average, each word contains five letters and each letter is encoded using the ASCll code, thus we have $\mu=40$ bits/word \cite{Lei_Yan}. Other parameters are set as $\lambda=1000$, $c=0.9$, $\nu=10^{-4}$, $\bar{\varepsilon}=0.9$, $P_{\text{b}}=P_{\text{r}}=0.1$ W, $N_0=-169$ dBm/Hz, and $\epsilon_{\rm 1}=10^{-8}$. Furthermore, we compare the performance of the proposed algorithm with the following benchmark schemes: 1) SemRelay design based on the exhaustive search, 2) SemRelay with optimized placement given equal bandwidth allocation, 3) SemRelay with optimized bandwidth allocation given fixed placement at the middle between the BS and user, and 4) conventional DF relay with joint placement optimization and bandwidth allocation.

Fig. \ref{rate_with_totband} shows the achievable bit rate obtained by different schemes versus the total bandwidth $W$. First, it is observed that for the SemRelay-aided text transmission system, our proposed penalty-based algorithm achieves close-to-optimal performance with the one based on the exhaustive search. Second, the proposed SemRelay-aided system achieves a much higher bit rate than the conventional DF relay when $W$ is small, since the BS$\rightarrow$user semantic transmission link transmits only the semantic information extracted from the original text, hence achieving higher spectral efficiency. Therefore, in the case of limited bandwidth, there are more degrees-of-freedom in the bandwidth design for further improving the performance of the SemRelay-aided communication system. On the other hand, when $W$ is sufficiently large, the conventional DF relay tends to provide a larger effective bit rate than the SemRelay. This can be intuitively understood, since the required bandwidth of semantic transmission is upper-bounded for satisfying the semantic similarity constraint (see \eqref{MinSimiCons} and \eqref{LocaSlaSimiThresh}), hence leading to a slower rate increasing. Besides, the proposed penalty-based algorithm significantly outperforms the benchmark schemes that optimize either bandwidth allocation or SemRelay placement, while the scheme with optimized bandwidth allocation achieves better performance than the one with optimized placement. In addition, the achievable rate obtained by the SemRelay with optimized bandwidth allocation only does not improve when the bandwidth is large, since given fixed SemRelay placement, the bandwidth allocated to the BS$\rightarrow$SemRelay link is upper-bounded (see Fig. \ref{bandasr_with_B}) for satisfying the semantic similarity requirement.

\begin{figure}[!t]
\centering
\begin{minipage}[t]{1\linewidth}  \label{Fig.4}      
\centering
\includegraphics[width=8.5cm, height=6.5cm]{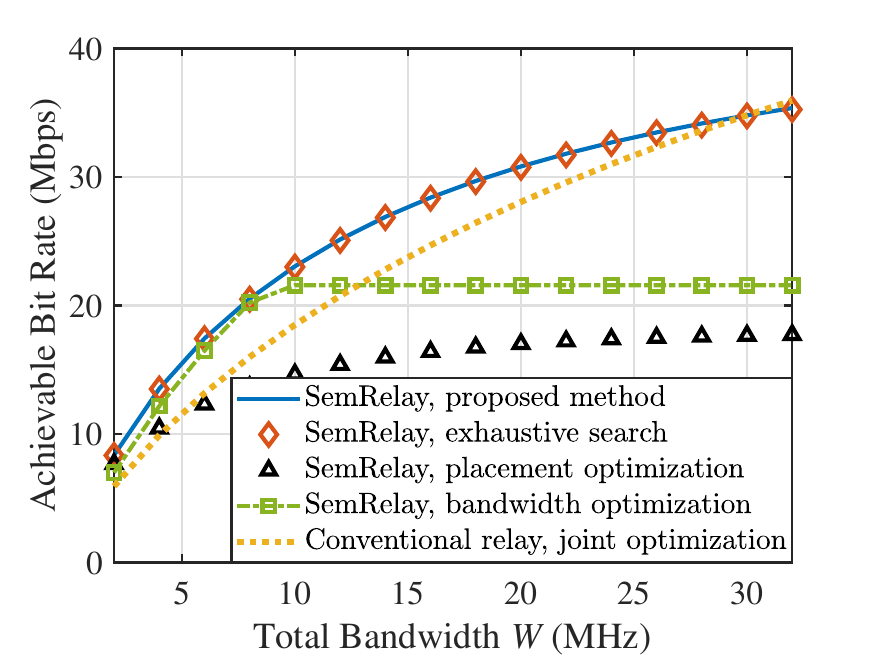}
\caption{The achievable bit rate versus total bandwidth.}
\label{rate_with_totband}
\end{minipage}
\begin{minipage}[t]{1\linewidth}  \label{Fig.4}      
\centering
\includegraphics[width=8.5cm, height=6.5cm]{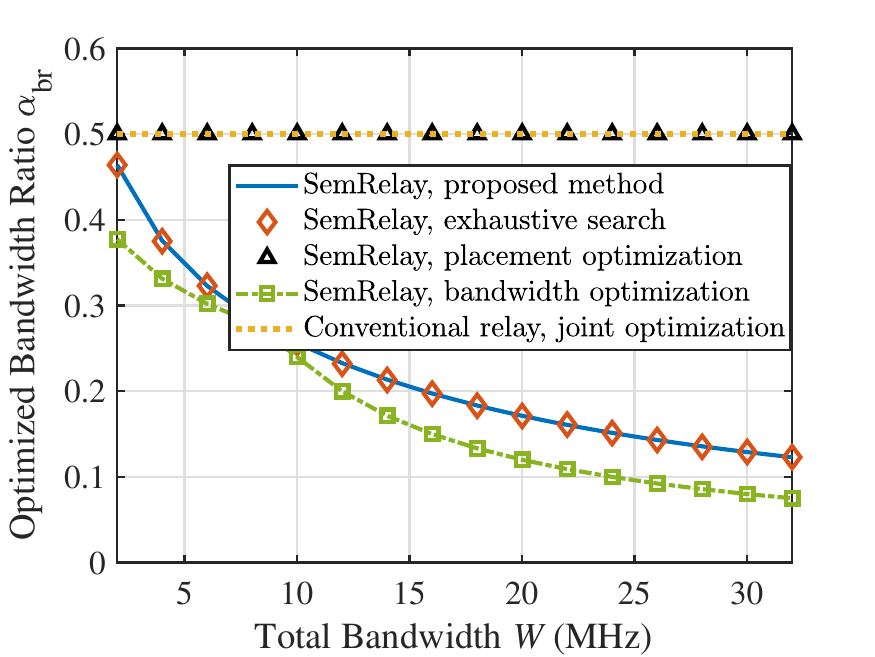}
\caption{Optimized bandwidth allocation versus total bandwidth.}
\label{bandasr_with_B}
\end{minipage}
\begin{minipage}[t]{1\linewidth}  \label{Fig.4}      
\centering
\includegraphics[width=8.5cm, height=6.5cm]{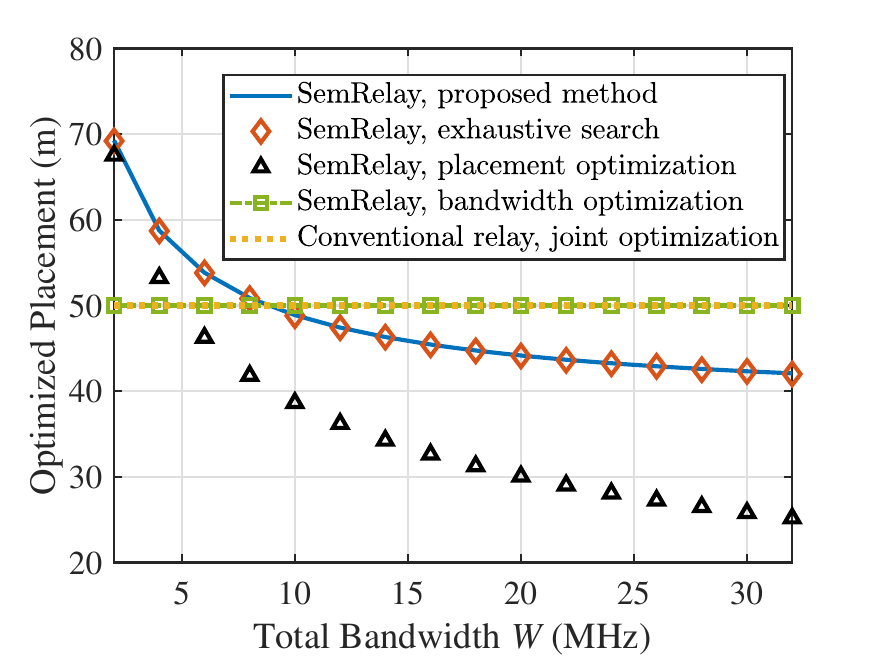}
\caption{Optimized SemRelay placement versus total bandwidth.}
\label{optimal_placement_with_B}
\end{minipage}
\end{figure}

Fig. \ref{bandasr_with_B} and Fig. \ref{optimal_placement_with_B} respectively show the optimized bandwidth allocation and relay placement of different schemes versus the total bandwidth $W$. Specifically, in Fig. \ref{bandasr_with_B}, unlike the conventional DF relay system with equal bandwidth allocation, in the SemRelay-aided system, the BS$\rightarrow$SemRelay link is allocated with less bandwidth than the SemRelay$\rightarrow$user link, since semantic communication can achieve much higher spectrum efficiency than the conventional bit transmission as only compressed semantic information need to be transmitted. In Fig. \ref{optimal_placement_with_B}, it is observed that for facilitating downlink communication, the SemRelay should be placed closer to the user when the total bandwidth decreases. This is expected since with less bandwidth, the performance of the SemRelay$\rightarrow$user bit transmission link, which has lower spectral efficiency than the BS$\rightarrow$SemRelay link, needs to be improved by reducing path loss.

\section{Conclusions}
In this paper, we proposed a new SemRelay-aided text transmission system, where a SemRelay equipped with the DeepSC receiver is deployed to first decode the semantic information sent by a resource-abundant BS, and then forwards it to the resource-constrained user based on conventional bit transmission. An optimization problem was formulated to maximize the achievable (effective) bit rate by jointly designing the SemRelay placement and bandwidth allocation, which was efficiently solved by using a penalty-based algorithm. Numerical results demonstrated close-to-optimal performance of the proposed algorithm as well as significant rate performance gain of the proposed SemRelay over conventional DF relay.

\section{Acknowledgment}
This work was supported by the National Natural Science Foundation of China under Grant 62201242, 62331023, and Young Elite Scientists Sponsorship Program by CAST 2022QNRC001.

\vspace{-0.1cm}


\begin{thebibliography}{1}
\bibitem{P_Zhang}
D. G{\"u}nd{\"u}z, Z. Qin, I. E. Aguerri, H. S. Dhillon, Z. Yang, A. Yener, K. K. Wong, and C.-B. Chae, ``Beyond transmitting bits: Context, semantics, and task-oriented communications,'' \emph{IEEE J. Sel. Areas Commun.}, vol. 41, no. 1, pp. 5-41, Jan. 2023.

\bibitem{W_Tong}
Q. Lan, D. Wen, Z. Zhang, Q. Zeng, X. Chen, P. Popovski, and K. Huang, ``What is semantic communication? A view on conveying meaning in the era of machine intelligence,'' \emph{J. Commun. Inf. Networks}, vol. 6, no. 4, pp. 336-371, Dec. 2021.

\bibitem{Z_Yang}
Z. Yang, M. Chen, G. Li, Y. Yang, and Z. Zhang, ``Secure semantic communications: Fundamentals and challenges,'' \emph{arXiv preprint arXiv:2301.01421}, 2023.

\bibitem{8}
J. Kang, H. Du, Z. Li, Z. Xiong, S. Ma, D. Niyato, and Y. Li, ``Personalized saliency in task-oriented semantic communications: Image transmission and performance analysis,'' \emph{IEEE J. Sel. Areas Commun.}, vol. 41, no. 1, pp. 186-201, Jan. 2023.

\bibitem{Huiqiang_Xie}
H. Xie, Z. Qin, G. Y. Li, and B.-H. Juang, ``Deep learning enabled semantic communication systems,'' \emph{IEEE Trans. Signal Process.}, vol. 69, pp. 2663-2675, Apr. 2021.

\bibitem{z_Weng}
Z. Weng, Z. Qin, X. Tao, C. Pan, G. Liu, and G. Y. Li, ``Deep learning enabled semantic communications with speech recognition and synthesis,'' \emph{IEEE Trans. Wireless Commun.}, early access. doi: 10.1109/TWC.2023.3240969.

\bibitem{9}
G. Zhang, Q. Hu, Z. Qin, Y. Cai, G. Yu, X. Tao, and G. Y. Li, ``A unified multi-task semantic communication system for multimodal data,'' \emph{arXiv preprint arXiv: 2209.07689}, 2022.

\bibitem{Lei_Yan}
L. Yan, Z. Qin, R. Zhang, Y. Li, and G. Y. Li, ``Resource allocation for text semantic communications,'' \emph{IEEE Wireless Commun. Lett.}, vol. 11, no. 7, pp. 1394-1398, Jul. 2022.

\bibitem{Xidong_Mu}
X. Mu, Y. Liu, L. Guo, and N. Al-Dhahir, ``Heterogeneous semantic and bit communications: A semi-NOMA scheme,'' \emph{IEEE J. Sel. Areas Commun.}, vol. 41, no. 1, pp. 155-169, Jan. 2023.

\bibitem{10}
L. Yan, Z. Qin, R. Zhang, Y. Li, and G. Y. Li, ``QoE-Aware resource allocation for semantic communication networks'', in \emph{Proc. IEEE Global Commun. Conf. (GLOBECOM)}, Rio de Janeiro, Brazil, Dec. 2022.

\bibitem{Q_shi}
Q. Shi and M. Hong, ``Penalty dual decomposition method for nonsmooth nonconvex optimization Part I: Algorithms and convergence analysis,'' \emph{IEEE Trans. Signal Process.}, vol. 68, pp. 4108-4122, Jun. 2020.

\bibitem{CVX}
M. Grant and S. Boyd, CVX: Matlab software for disciplined convex programming, version 2.1. [Online]. Available: http://cvxr.com/cvx.

\bibitem{S_Boyd}
S. Boyd and L. Vandenberghe, Convex Optimization. Cambridge, U.K.: Cambridge Univ. Press, 2004.
\end{thebibliography}
\end{document}